\definecolor{light-gray}{gray}{0.95}
\let\le\leqslant
\let\ge\geqslant
\title{\MakeUppercase{A Fast Algorithm for the Product Structure of Planar Graphs}}
\author{
  Pat Morin%
    \thanks{School of Computer Science, Carleton University, Canada. This research was partially supported by NSERC.}
}
\date{}
\begin{document}
\maketitle

\begin{abstract}
  Dujmović \etal\  (FOCS2019) recently proved that every planar graph $G$ is a subgraph of $H\boxtimes P$, where $\boxtimes$ denotes the strong graph product, $H$ is a graph of treewidth 8 and $P$ is a path.  This result has found numerous applications to linear graph layouts, graph colouring, and graph labelling.  The proof given by Dujmović \etal\  is based on a similar decomposition of Pilipczuk and Siebertz (SODA2019) which is constructive and leads to an $O(n^2)$ time algorithm for finding $H$ and the mapping from $V(G)$ onto $V(H\boxtimes P)$.  In this note, we show that this algorithm can be made to run in $O(n\log n)$ time.
\end{abstract}

\section{Introduction}

The \emph{strong product} $G_1\boxtimes G_2$ of two graphs $G_1$ and $G_2$ is the graph whose vertex set is the Cartesian product $V(G_1)\times V(G_2)$ in which the vertices $(v,x)$ and $(w,y)$ are adjacent if and only if
\begin{compactitem}
  \item $v=w$ and $xy\in E(G_2)$;
  \item $vw\in E(G_1)$ and $x=y$; or
  \item $vw\in E(G_1)$ and $xy\in E(G_2)$.
\end{compactitem}
Dujmović \etal\  \cite{dujmovic.joret.ea:planar} recently proved the following \emph{product structure theorem} for planar graphs:

\begin{thm}[Dujmović \etal\ 2019]\thmlabel{product-structure}
  For any $n$-vertex planar graph $G$, there exists a graph $H$ of treewidth at most 8 and a path $P$ such that $G$ is a subgraph of $G^+:=H\boxtimes P$.
\end{thm}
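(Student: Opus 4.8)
The plan is to prove the theorem through the notion of a \emph{layered partition}. I would begin with the routine observation linking the strong product to such partitions: suppose $V(G)$ is partitioned into nonempty connected subgraphs (\emph{bags}), and suppose $(V_0,V_1,\dots)$ is a \emph{layering} of $G$ — an ordered partition with every edge of $G$ having its ends in a common part or in two consecutive parts — such that each bag meets each $V_i$ in at most one vertex. Let $H$ be the graph obtained from $G$ by contracting every bag (keeping it simple) and let $P$ be the path $v_0v_1\cdots$ on the layers. Then $v \mapsto (\text{bag of } v,\, v_{\ell(v)})$, where $\ell(v)$ is the index of the layer containing $v$, is an isomorphism from $G$ onto a subgraph of $H \boxtimes P$: it is injective because a bag hits a layer at most once, and each edge $vw$ maps to an edge of $H \boxtimes P$ since $|\ell(v)-\ell(w)|\le 1$ and, when $v$ and $w$ share a bag, $\ell(v)\ne\ell(w)$. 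So it suffices to produce such a partition in which $H$ has treewidth at most $8$.

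Next I would reduce to plane triangulations: graphs on at most $3$ vertices are trivial, and otherwise we may add edges to obtain a plane triangulation $G'\supseteq G$, so that $G'\subseteq H\boxtimes P$ implies $G\subseteq H\boxtimes P$. Fix a vertex $r$ on the outer face and let $(V_0,V_1,\dots)$ be the BFS layering from $r$; BFS guarantees this is a layering. Call a path in $G$ \emph{vertical} if it contains exactly one vertex from each of $V_j,V_{j+1},\dots,V_{j'}$ for some $j\le j'$. Any family of vertex-disjoint vertical paths that partitions $V(G)$ automatically satisfies the one-vertex-per-layer condition, so the whole problem reduces to: partition $V(G)$ into vertical paths whose contraction has treewidth at most $8$.

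The core is a recursive decomposition of the triangulation into \emph{tripods}. Working inside a sub-triangulation bounded by a cycle all of whose vertices already lie on at most three vertical paths built so far, I would use the BFS tree together with planarity to find a single vertex $t$ and three vertical paths running from $t$ up to the three vertices of an appropriately chosen triangle; their union $\tau$ is a tree with a single branch vertex — the tripod — and is added to the partition. Deleting $\tau$ breaks the region into strictly smaller sub-triangulations, and the crucial planar/BFS claim is that each piece again has all of its boundary vertices on at most three of the vertical paths constructed so far (the two tripod legs bounding the piece, plus at most one inherited from the parent). Iterating to the bottom partitions $V(G)$ into vertical paths; and since every region of the recursion touches only a bounded number of tripods, indexing a tree decomposition of the contracted graph by the recursion tree (equivalently, by faces of $G$) yields bounded width, and a careful count of how many tripods meet a single region gives treewidth at most $8$.

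The main obstacle is precisely that bounded-interaction invariant together with the attendant counting: choosing each tripod so that it genuinely separates its region into pieces that see at most three tripods, and then converting the recursion tree into a tree decomposition of the contracted graph whose width is small enough to land on the constant $8$. Verticality is what makes this go through — it forces the tripod legs to be monotone across the BFS layers and prevents a piece from wrapping around its tripod — but tracking the count is where the real work lies; the reduction to triangulations, the layered-partition embedding, and the fact that BFS layers form a layering are all straightforward.
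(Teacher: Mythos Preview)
Your framework matches the paper's: triangulate, take the BFS layering from a root on the outer face, partition into vertical paths, and read off $G\subseteq H\boxtimes P$ from the quotient; the recursive tripod-style decomposition and the bounded-interaction treewidth argument are also the right shape. But there is a concrete gap: you never say how the tripod is actually located, and your geometric description of it is not the one that works. The paper's device is Sperner's Lemma. The boundary cycle $F$ is split by three \emph{portal} edges into arcs $P_1,P_2,P_3$; each interior vertex $v$ is coloured by the index of the arc that the BFS-tree path from $v$ toward the root first meets; Sperner then guarantees a trichromatic inner face $\tau=x_1x_2x_3$, and the legs $Z_i$ are the BFS-tree paths from $x_i$ up to (but not including) the first vertex of $P_i$. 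So the tripod is centred at a \emph{triangle}, not at a single branch vertex $t$, and each $Z_i$ is added to $\mathcal{P}$ as a separate part --- if instead you add the whole tripod as one part, it meets a single BFS layer in up to three vertices and the embedding into $H\boxtimes P$ breaks.

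Your stated invariant is also off for the treewidth-$8$ bound. In the paper each recursive boundary cycle is split by three portals into three arcs, each covered by at most \emph{two} parts of $\mathcal{P}$, so up to six parts cover the boundary; together with the three new legs $Z_1,Z_2,Z_3$ this gives bags of size at most $9$, whence treewidth $\le 8$. The ``two legs plus at most one inherited path'' picture you describe (boundary in three parts) is the invariant for the other variant in which each whole tripod is a single part; that variant yields treewidth $\le 3$, but since its parts are not single vertical paths it only gives $G\subseteq H\boxtimes P\boxtimes K_3$, not the statement you are trying to prove.
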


Though still very new, \thmref{product-structure} has been used to solve a number of longstanding open problems on planar graphs:
\begin{itemize}
  \item \thmref{product-structure} has been used to show that the queue-number of every planar graph is upper bounded by a constant.  This solves an open problem of Heath, Leighton, and Rosenberg posed in 1992 \cite{heath.leighton.ea:comparing}.
  \item \thmref{product-structure} has been used to show that the nonrepetitive chromatic number of every planar graph is upper bounded by a constant. This solves an open problem of Alon \etal\  \cite{alon.grytczuk.ea:nonrepetitive} posed in 2002.
  \item \thmref{product-structure} has been used to produce (asymptotically) optimal labelling schemes for planar graphs \cite{dujmovic.esperet.ea:adjacency}.  This (asympotically) resolves a problem of Kannan, Naor, and Rudich posed in 1988 \cite{kannan.naor.ea:implicit-stoc,kannan.naor.ea:implicit}.
  \item \thmref{product-structure} has been used to make significant improvements on the best-known bounds for $p$-centered colourings of planar graphs \cite{debski.felsner.ea:improved}.  This gives the strongest result thus far on a question motivated by the work of Nešetřil and Ossona de Mendez from 2006 \cite{nesetril.ossona:tree,nesetril.ossona:grad} and posed explicity by Dvořák in 2016 \cite{dvorak:question}.
\end{itemize}

The proof of \thmref{product-structure} given by Dujmović \etal\  is based on a similar decomposition of Pilipczuk and Siebertz \cite{pilipczuk.siebertz:polynomial-soda} which is constructive and leads to an $O(n^2)$ time algorithm for finding $H$ and the mapping from $V(G)$ onto $V(H\boxtimes P)$ \cite[Section~10]{dujmovic.joret.ea:planar}. Given the number of applications of \thmref{product-structure} (and that more are likely to be found), it is natural to ask if this running-time can be improved.  In this paper, we provide a faster algorithmic version of \thmref{product-structure}:

\begin{thm}\thmlabel{main}
  For any $n$-vertex planar graph $G$, there exists a graph $H$ of treewidth at most 8 and a path $P$ such that $G$ is a subgraph of $G^+:=H\boxtimes P$.

  Furthermore, there exists an algorithm that, given $G$ as input, runs in $O(n\log n)$ time and produces the graph $H$, the path $P$, and an injective function $\varphi:V(G)\to V(G^+)$ such that, for each edge $vw\in E(G)$,  $\varphi(v)\varphi(w)\in E(G^+)$.
\end{thm}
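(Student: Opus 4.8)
Since the structural claim of \thmref{main} is precisely \thmref{product-structure}, only the algorithm needs to be described. The plan is to follow the proof of \thmref{product-structure} due to Dujmović \etal\ but to reorganise its recursion so that it has depth $O(\log n)$ while performing $O(n)$ work per level. First I would reduce to the case that $G$ is a connected triangulation: adding edges to $G$ only makes the required containment $G\subseteq H\boxtimes P$ harder to satisfy, so it suffices to handle a triangulation, and computing a planar embedding of $G$ and triangulating it cost $O(n)$ time. Next, compute a BFS spanning tree $T$ of $G$ rooted at a vertex of the outer face, together with the BFS layers $V_0,V_1,\dots$; this takes $O(n)$ time, and the layers index the path $P=v_0v_1\cdots$. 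Recall that the proof of \thmref{product-structure} recursively carves $G$ into \emph{tripods}, each a union of at most three \emph{vertical paths} of $T$ (a vertical path being a path of $T$ joining a vertex to one of its ancestors); taking the parts of a partition $\mathcal P$ of $V(G)$ to be the individual vertical paths making up these tripods yields a quotient graph $H:=G/\mathcal P$ of treewidth at most $8$, and then setting $\varphi(v)$ to be the pair consisting of the part of $\mathcal P$ that contains $v$ and the vertex $v_i$, where $i$ is the BFS layer of $v$, witnesses $G\subseteq H\boxtimes P$. The recursion maintains the invariant that each subproblem is a region $R$ of the plane whose boundary is a union of at most three vertical paths of $T$; a single step extracts one tripod, whose three ``legs'' split $R$ into at most three smaller regions of the same kind, and recurses on each. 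In the implementation of Dujmović \etal\ the extracted tripod is essentially arbitrary, so the recursion tree may have depth $\Theta(n)$, which, together with $\Theta(n)$ work done at many levels, is what makes that algorithm run in $\Theta(n^2)$ time.

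The key observation is that correctness of the carving --- in particular the treewidth-$8$ bound on $H$ --- depends only on the invariant above, not on which tripod is extracted at each step. So the plan is to extract, at every step, a \emph{balanced} tripod. Write $F(R)$ for the set of faces of $G$ strictly inside $R$; an extracted tripod always consumes at least one such face, so the face sets of the child regions $R_1,R_2,R_3$ are disjoint subsets of $F(R)$ with $\sum_i|F(R_i)|<|F(R)|$. If each step extracts a tripod for which $|F(R_i)|\le c\,|F(R)|$ for all $i$, where $c<1$ is an absolute constant, then the recursion tree has depth $O(\log n)$, and because the face sets of all regions occurring at a fixed level are disjoint subsets of $F(G)$, the total work over all regions at a fixed level can be held to $O(n)$. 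I expect the main obstacle to be the balanced-tripod lemma: that such a tripod always exists (a centroid-type statement about $R$, in the spirit of planar balanced separators) and --- more delicately --- that one can be found in time $O(|F(R)|)$ rather than, say, $O(|F(R)|^2)$. The approach I would take is to show that the sizes of the three child regions cut out by a candidate tripod can be recovered from a single traversal of $R$ (for instance by rooting a spanning tree of the inner dual of $R$, precomputing subtree sizes, and correcting for the tripod's legs), so that the balanced tripod can be located without repeatedly re-traversing $R$.

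It then remains to implement one recursive step, on a region $R$, in time $O(|F(R)|+|B(R)|)$ --- where $B(R)$ denotes the set of vertices on the boundary of $R$ --- plus comparable cost to set up its children. I would represent each region by the list of its at most three boundary vertical paths and an implicit description of its interior faces. Extracting a tripod amounts to following parent pointers in $T$ from a few start vertices until the current boundary is reached, so the ``is this vertex currently on the boundary?'' test must run in $O(1)$ time; I would achieve this by marking the $O(|B(R)|)$ boundary vertices of $R$ before $R$ is processed and unmarking them afterwards. The charging argument is that a fixed vertex $v$ lies on the boundary of exactly a prefix of the nested chain of regions containing $v$ --- once $v$ becomes interior to a region it remains interior to that region's descendants until $v$ is extracted --- so $v$ is a boundary vertex in only $O(\log n)$ regions, whence all boundary marking costs $O(n\log n)$ time in total; the total length of all legs ever traversed is only $O(n)$, since distinct tripods are vertex-disjoint. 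Combining this with $O(n)$ work per level over $O(\log n)$ levels yields the $O(n\log n)$ bound for the carving. Finally, I would read $\mathcal P$ off the tripods, build $H=G/\mathcal P$ and $\varphi$ with a single scan of $V(G)$ and $E(G)$, and output $H$, $P=v_0v_1\cdots$ and $\varphi$ (and, if desired, the width-$8$ tree decomposition of $H$ that the recursion produces along the way) in $O(n)$ additional time; correctness of the output is inherited verbatim from \thmref{product-structure}.
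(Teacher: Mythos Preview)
Your plan diverges from the paper's, and the divergence is exactly at the step you flag as the ``main obstacle'': the balanced-tripod lemma.  That lemma is not merely unproved---it appears to be false.  In the Dujmovi\'c \etal\ recursion, the tripod is not an arbitrary union of three vertical paths: in order to maintain precondition~(P4) (each of the three boundary paths of a subregion is covered by at most two parts of $\mathcal P$, which is what yields the treewidth-$8$ bound), each leg $Z_i$ must end on the boundary path $P_i$, and this forces the central face $\tau$ to be a \emph{trichromatic} triangle for the Sperner colouring induced by $T$ and the current portals.  Sperner's Lemma only guarantees an odd number of such triangles, and that number can be exactly one.  A fan triangulation over $P_1$ (with $P_2$ and $P_3$ single vertices) already exhibits a unique trichromatic face adjacent to the portal $e_1$; the resulting split has $f_1=f-1$ and $f_2=f_3=0$, so no choice of tripod is balanced and the recursion depth is $\Theta(n)$.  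Allowing non-trichromatic central faces does not help: if two legs land on the same $P_i$, some subregion inherits all of $P_j$ together with two new legs on its boundary, destroying (P4).  Hence there is no ``centroid-type'' freedom to exploit here, and your $O(\log n)$-depth argument (and the boundary-marking charging that depends on it) collapses.

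The paper sidesteps balance entirely.  It keeps the \emph{same} recursion tree as the original quadratic algorithm---hence the same $\mathcal P$, the same $H$, and the same tree decomposition---but reduces the cost of each recursive step from $O(f)$ to $O(1+\min\{f_1,f_2,f_3\})$.  Two ingredients achieve this.  First, a nearest-marked-ancestor structure on $T$ (with the current boundary $V(F)$ as the marked set) answers ``what colour is $w$?'' in $O(1)$ time, so one need not recolour all of $N$.  Second, the Sperner search for $\tau$ is launched from all three portals in lockstep: after $k$ steps each portal's walk has visited $k$ bichromatic faces, and those faces lie in $Q_1,Q_2,Q_3$ respectively, so when $\tau$ is found at step $k$ one has $k\le\min\{f_1,f_2,f_3\}$.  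The recurrence $T(f)\le a(1+\min\{f_1,f_2,f_3\})+T(f_1)+T(f_2)+T(f_3)$ with $f_1+f_2+f_3=f-1$ resolves to $O(f\log f)$ with no balance hypothesis, and the marking operations contribute a further $O(n\log n)$ in total.
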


The remainder of this paper is organized as follows. \Secref{original} reviews the proof of \thmref{product-structure} and the resulting $O(n^2)$ time algorithm.  \Secref{faster} describes the $O(n\log n)$ time algorithm.  \Secref{discussion} discusses some of the implications and generalizations of this work.

\section{The Original Proof/Algorithm}
\seclabel{original}

Throughout this paper we use standard graph theory terminology as used in the textbook by Diestel \cite{diestel:graph}.  Every graph $G$ that we consider is finite, simple, and undirected, and has vertex set denoted by $V(G)$ and edge set denoted by $E(G)$.

Let $T$ be a tree rooted at some node $r$ and, for each node $v$ of $T$, let $P_T(v)$ denote the path in $T$ from $v$ to $r$.  The \emph{$T$-depth} of a node $v$ in $T$ is the length of $P_T(v)$.\footnote{The \emph{length} of a path is equal to the number of edges in the path, which is one less than the number of vertices in the path.}  A path $P$ in $T$ is a \emph{vertical path} if no two nodes of $P$ have the same $T$-depth.  Every node $w$ in $P_T(v)$ is a \emph{$T$-ancestor} of $v$ and $v$ is a \emph{$T$-descendant} of every node $w$ in $P_T(v)$.  Note that $v$ is both a $T$-ancestor and $T$-descendant of itself.

For a graph $G$ and a partition $\mathcal{P}$ of $V(G)$, the \emph{quotient graph} $G/\mathcal{P}$ is the graph whose vertices $V(G/\mathcal{P})$ are the sets in $\mathcal{P}$ and in which an edge $XY\in E(G/\mathcal{P})$ if and only if there exists $x\in X$ and $y\in Y$ with $xy\in E(G)$.  Dujmović \etal\  \cite{dujmovic.joret.ea:planar} prove \thmref{product-structure} by first adding edges to a planar graph $G_0$ to complete it to a triangulation $G$, computing a breadth-first search tree $T$ of $G$ and then applying the following result to $G$ and $T$:

\begin{thm}\thmlabel{triangulation-partition}
  For any $n$-vertex triangulation $G$ and any spanning tree $T$ of $G$, there exists a partition $\mathcal{P}$ of $V(G)$ such that each $P\in\mathcal{P}$ induces a vertical path in $T$ and the quotient graph $H:=G/\mathcal{P}$ has treewidth at most $8$.
\end{thm}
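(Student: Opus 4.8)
The plan is to reprove \thmref{triangulation-partition} using the \emph{tripod decomposition} of Dujmović \etal\ and organizing everything as a single recursion that simultaneously builds the partition $\mathcal{P}$ and a tree-decomposition of $H:=G/\mathcal{P}$ of width at most $8$. First I would fix a plane embedding of $G$ in which the root $r$ of $T$ lies on the outer face; since $G$ is a triangulation, the outer face is a triangle $rab$ and every face is bounded by three vertices. Call a connected subgraph $\tau\subseteq G$ a \emph{tripod} if $V(\tau)$ is a union of at most three vertical paths of $T$ that pairwise overlap only in a common top segment, so that $\tau$ hangs off $T$ like a broom with at most three bristles (its \emph{legs}). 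Splitting each leg into a separate part (trimming the top vertex of two of them so the legs become disjoint) converts any partition of $V(G)$ into tripods into a partition $\mathcal{P}$ of $V(G)$ into vertical paths, at the cost of tripling bag sizes.

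The engine is the \emph{Tripod Lemma}: if $R$ is the subgraph of $G$ formed by a cycle $C$ that is the concatenation of at most three vertical paths $Q_1,Q_2,Q_3$ of $T$ together with everything drawn inside $C$, and $R$ has a vertex in the open interior of $C$, then there is a tripod $\tau\subseteq R$, internally disjoint from $C$, whose $i$-th leg (if present) meets $C$ only in the interior of $Q_i$, such that the components of $R-V(\tau)$, taken with their boundaries, are again at most three regions of exactly this form. Granting the lemma, I would start the recursion at $R_0:=G$ with $C_0$ the outer triangle --- the concatenation of the three trivial vertical paths $\{r\},\{a\},\{b\}$ --- apply the Tripod Lemma repeatedly, and halt at regions with no interior vertex. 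The trivial tripods $\{r\},\{a\},\{b\}$ together with all tripods produced by the recursion partition $V(G)$, and refining them into legs gives $\mathcal{P}$.

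For the width bound I would take the tree-decomposition whose nodes are the recursive calls, assigning to the call on a region $R$ the bag consisting of the at most three tripods bounding $R$, each represented by its at most three vertical paths: at most nine vertices of $H$ per bag, hence width at most $8$. The tree-decomposition axioms are then routine to verify. The vertical paths of each tripod $\tau$ occur precisely in the bags of the call that produced $\tau$ and of those descendants down to which $\tau$ still bounds a subregion, which form a connected subtree. And for an edge $vw\in E(G)$ with $v,w$ in distinct tripods $A\ne B$: because a tripod separates its region, $vw$ descends through the recursion to a halting region with no interior vertex, where both $v$ and $w$ lie on the boundary, so $A$ and $B$ are among that region's bounding tripods and the edge of $H$ goes into its bag; edges inside a single tripod go into a bag of one of the subregions that tripod bounds.

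The real obstacle is the Tripod Lemma, and within it the requirement that the pieces of $R-V(\tau)$ keep at most three sides: if two legs of $\tau$ leave through the same $Q_i$, or a leg fails to reach a side, a piece can acquire four or five sides and the invariant collapses. The remedy is to build $\tau$ from $T$ --- take an interior vertex $z$ of $R$, follow $P_T(z)$ up to its first contact with $C$ to obtain one leg, and then grow or prune the remaining legs so that the three legs leave through \emph{distinct} sides, letting legs and pieces degenerate as needed --- and then verify that each surviving piece is bounded by a suffix of one $Q_i$, a prefix of another, and the two legs between them, three vertical paths in all. Making this case analysis airtight, including the degenerate configurations in which $R$ has fewer than three sides or a leg runs along part of $C$, is the delicate step; everything else is bookkeeping, and it is precisely this bookkeeping that the $O(n\log n)$ implementation must carry out efficiently.
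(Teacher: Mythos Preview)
Your overall strategy---recursively carving out tripods and reading off a tree decomposition from the recursion---is the paper's strategy, but the invariant you state is not the one that works. After one step, a subregion is bounded by a suffix of one $Q_i$, a prefix of another, and the \emph{two} legs between them: four vertical paths, not ``three vertical paths in all,'' so your Tripod Lemma's hypothesis is not preserved under recursion. The paper's invariant is that three \emph{portal} edges split the outer cycle $F$ into arcs $P_1,P_2,P_3$ with each $P_i$ covered by at most \emph{two} parts of $\mathcal{P}$---at most six boundary parts, not three. Correspondingly, your bag ``the at most three tripods bounding $R$'' is wrong on two counts: it omits the tripod $\tau$ created \emph{inside} $R$ (without it a new leg $V(Z_2)$ lies in the bags of two sibling subregions $Q_1,Q_3$ but not in $B_R$, so the connectivity axiom fails), and the $9$ does not come from $3\times 3$ but from at most $6$ boundary parts plus the $3$ new legs $V(Z_1),V(Z_2),V(Z_3)$.

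Your sketch of the Tripod Lemma is also not how the paper proceeds, and as written it is not a proof. Rather than starting from an arbitrary interior vertex and ``growing or pruning'' legs so that they exit through distinct sides, the paper colours every interior vertex $v$ by the arc $P_i$ that $P_T(v)$ first meets and invokes Sperner's Lemma to obtain a trichromatic inner face $x_1x_2x_3$; the legs $Z_i$ are then simply the $T$-paths from each $x_i$ up to (but not including) $P_i$, and the fact that they exit through three distinct sides is automatic from trichromaticity. Your ``grow or prune'' step would have to rediscover exactly this.
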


Deriving \thmref{product-structure} from \thmref{triangulation-partition} is just a matter of checking definitions.  The graph $H$ in \thmref{product-structure} is the same graph $H$ in \thmref{triangulation-partition}. The path $P$ in \thmref{product-structure} is simply the path $0,1,2,\ldots,h$ where $h$ is the maximum depth of any node in $T$.  Each vertex $v\in V(G)$ maps to the node $\varphi(v):=(X,y)$ where $X$ is the set in $\mathcal{P}$ that contains $v$ and $y$ is the depth of $v$ in $T$.  It is straightforward to check (using the definition of $\boxtimes$ and the fact that $T$ is a breadth-first search tree) that for any  edge $vw\in E(G)$, $\varphi(v)\varphi(w)\in E(H\boxtimes P)$.

Therefore, we will focus on giving a fast algorithm for \thmref{triangulation-partition}, from which we immediately obtain \thmref{main}.  We begin by describing the proof of Dujmović \etal\  \cite{dujmovic.joret.ea:planar}, which is inductive, and leads naturally to a recursive algorithm.  Refer to \figref{tripod}. The algorithm is initialized with a breadth-first-search tree $T$ of the triangulation $G$.  Each recursive invocation of the algorithm is given as input:
\begin{compactenum}
  \item A cycle $F$ in $G$.

  The subgraph of $G$ that includes the edges and vertices of $F$ and the edges and vertices of $G$ in the interior of $F$ is a near-triangulation, $N$.  The following are preconditions on the cycle $F$:
  \begin{compactenum}[(P1)]
    \item The root $r$ of $T$ is not in the interior of $F$, i.e., $r\not\in\ V(N)\setminus V(F)$.
    \item For every vertex $v\in V(N)\setminus V(F)$, and every $T$-descendant $w$ of $v$, $w\in V(N)\setminus V(F)$.
    \item Prior to this recursive invocation, every vertex of $F$ is already included in some part of the partition $\mathcal{P}$ and no vertex in $V(N)\setminus V(F)$ is included in any part of $\mathcal{P}$.
  \end{compactenum}
  \item Three edges $e_1$, $e_2$, and $e_3$ of $F$ that we will call \emph{portals}.

  Removing $e_1$, $e_2$ and $e_3$ from $F$ splits $F$ into three non-empty paths $P_1$, $P_2$, and $P_3$ where, for each $i\in\{1,2,3\}$, neither endpoint of $e_i$ is included in $P_i$.  The portals satisfy the following precondition:
  \begin{compactenum}[(P1)]\setcounter{enumii}{3}
    \item For each $i\in\{1,\ldots,3\}$, $V(P_i)$ is contained in the union of at most two elements of $\mathcal{P}$.
  \end{compactenum}

\end{compactenum}

\begin{figure}
  \begin{center}
    \includegraphics{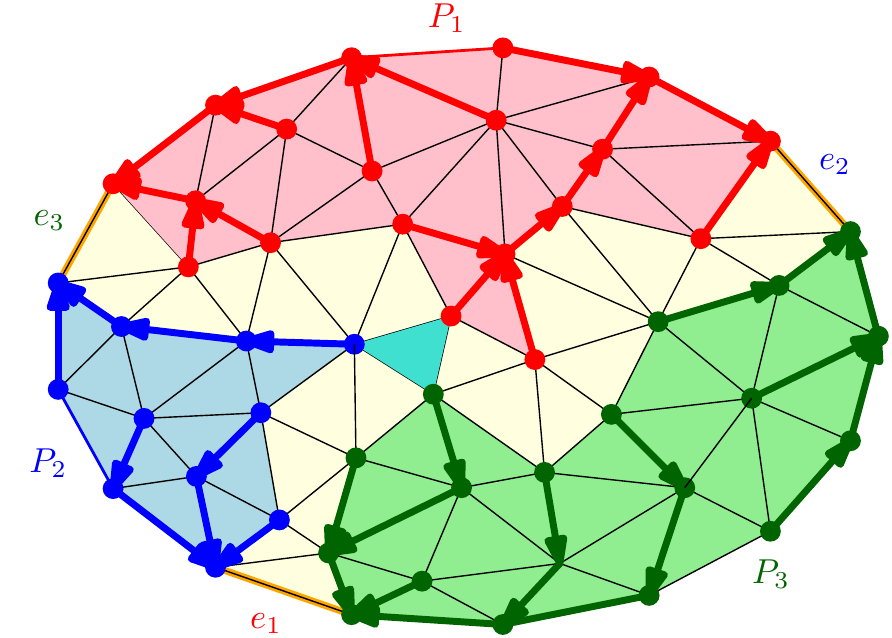} \\[1ex]
    \includegraphics{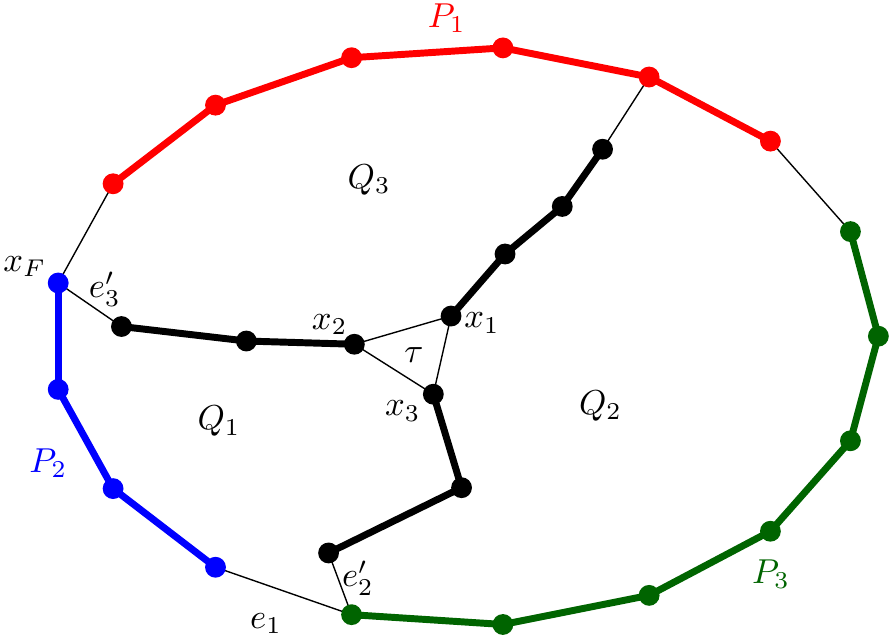} \\[1ex]
    \includegraphics{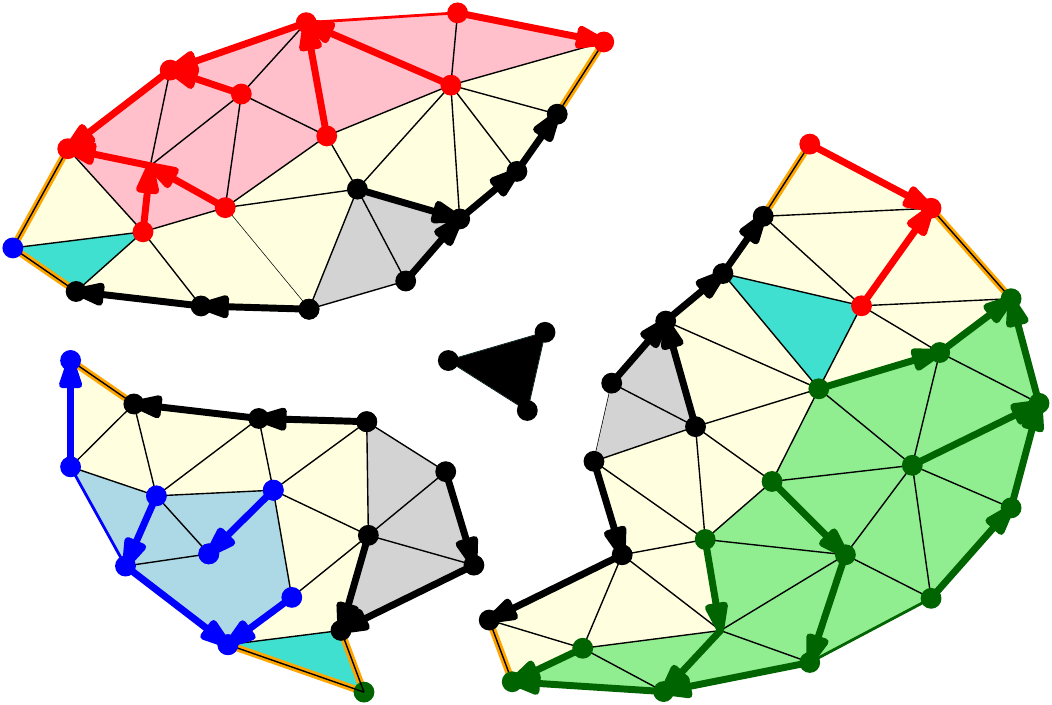}
  \end{center}
  \caption{A single recursive step from Dujmović \etal\  \cite{dujmovic.joret.ea:planar}.}
  \figlabel{tripod}
\end{figure}

By the time the recursive invocation terminates, each vertex of $N-V(F)$ is included in some part of the partition $\mathcal{P}$. Let $f$ denote the number of inner triangular faces of $N$.  The base case occurs when $f=1$ so $N$ consists of a single triangle ($F$).  In this case (P3) implies that each vertex of $N$ is already included in $\mathcal{P}$ and there is nothing to do so the algorithm returns immediately.

If $f>1$, the paths $P_1$, $P_2$, and $P_3$, along with the breadth-first search tree $T$ are used to partition the vertices of $N$ into three colour classes, as follows.  Each vertex $v\in P_i$ has colour $c(v)=i$.  For each vertex $v\in V(N)\setminus V(F)$, (P1) implies that $P_T(v)$ contains some first vertex $v_F$ of $F$.  The vertex $v$ is assigned the colour $c(v)=c(v_F)$.

By Sperner's Lemma, $N$ contains a triangular face $\tau=x_1x_2x_3$ that is \emph{trichromatic}, i.e., $c(x_i)=i$ for each $i\in\{1,2,3\}$. (Note that $0$, $1$, $2$, or $3$ vertices of $\tau$ may be in $V(F)$.)  The edges of $F$, $\tau$, and the paths in $T$ from each $x_i$ to the first vertex of $P_i$ define a graph $M$ with at most 4 interior faces, one of which is $\tau$.  Each of the other (at most three) interior faces does not contain $x_i$ for some $i\in\{1,2,3\}$. For each $i\in\{1,2,3\}$, we let $Q_i$ denote the interior face of $M$ that does not contain $x_i$. Observe that, for each $i\in\{1,2,3\}$, $Q_i$ contains no vertex of $P_i$.

For each $i\in\{1,2,3\}$, let $Z_i$ be the path, in $T$, from $x_i$ up to, but not including, the first vertex in $P_i$.  Note that $Z_i$ may be empty, which occurs when $x_i$ is a vertex of $P_i$.  Let $Y:=V(Z_1)\cup V(Z_2)\cup V(Z_3)$.  The algorithm adds $V(Z_1)$, $V(Z_2)$ and $V(Z_3)$ to the partition $\mathcal{P}$ and then recurses on each of $Q_1$, $Q_2$, and $Q_3$.

We now argue that $Q_1$ satisifies preconditions (P1)--(P3). The face $Q_1$ is a cycle in $G$ that is contained in the cycle $F$, so $Q_1$ satisifies precondition (P1).  The vertices of $Q_1$ are contained in $V(P_2)\cup V(P_3)\cup Y$.  Therefore every vertex of $Q_1$ is contained in some part of $\mathcal{P}$, so $Q_1$ satisfies precondition (P3).

Let $v$ be any vertex in the interior of $Q_1$ and let $w$ be any $T$-descendant of $v$.  In order to show that $Q_1$ satisfies precondition (P2) we must show that $w$ is in the interior of $Q_1$.  If $w$ is not in the interior of $Q_1$ then the path in $T$ from $w$ to $v$ contains a vertex $v'\in V(Z_2)\cup V(Z_3)\cup V(F)$.  Since $v'$ is a $T$-descendant of $v$, precondition (P2) implies that $v'\not\in V(F)$.  If $v'\in V(Z_i)$ for some $i\in\{2,3\}$, then the path from $w$ to $v$ in $T$ contains the subpath of $Z_i$ beginning at $v'$ and continuing to the last vertex $v''$ of $Z_i$.  The path from $w$ to $v$ in $T$ also contains the $T$-parent $v'''$ of $v''$.  This again contradicts (P2) because, by definition, $v'''\in V(F)$ and is a $T$-descendant of $v$.  Therefore every $T$-descendant $w$ of $v$ is contained in the interior of $Q_1$, so $Q_1$ satisfies precondition (P2).

Next we describe the three portals used when recursing on $Q_1$.
The cycle $Q_1$ contains at least one vertex each from $V(P_2)$ and $V(P_3)$ and therefore also contains the portal $e_1$, which is also used as one of the three portals in the recursive invocation.  If $V(Z_2)\cup V(Z_3)$ is non-empty, then $Q_1$ contains two edges $e_2'$ and $e_3'$ where $e_2'$ has an endpoint in $V(P_3)$ and an endpoint in $V(Z_2)\cup V(Z_3)$ and where $e_3'$ has an endpoint in $V(P_2)$ and an endpoint in $V(Z_2)\cup V(Z_3)$.  In this case, the edges $e_1$, $e_2'$, and $e_3'$ are used as the three portals in the recursive invocation on $Q_1$.  Note that $e_1$, $e_2'$ and $e_3'$ satisfy precondition (P4) since the vertices of $P_1'$---the path from $e_2'$ to $e_3'$ on $Q_1$ that does not contain $e_1$---are contained in the union of $V(Z_2)$ and $V(Z_3)$, which are included in $\mathcal{P}$.

If $(V(Z_2)\cup V(Z_3))$ is empty---because $x_2\in V(P_2)$ and $x_3\in V(P_3)$---then we artifically create two portals $e_2'$ and $e_3'$ for the recursive invocation by taking any two edges of $Q_1$ other than $e_1$. Clearly, this choice of $e_2'$ and $e_3'$ also satisfies precondition (P4).

The recursive invocations on $Q_2$ and $Q_3$ are done similarly, but rotating the values $1,2,3$.  After these three recursive invocations, every vertex in $N-V(F)$ is included in some part of $P$, so the recursive invocation is complete.

\subsection{Running-Time Analysis}

Recall that $f$ denotes the number of inner faces in the near-triangulation $N$.  By having each vertex of $G$ store a pointer to its parent in $T$ and storing $G$ using a representation that simultaneously represents $G$ and its dual graph $G^*$, the colouring of the vertices of $N$ can be done in $O(f)$ time and then the inner triangular faces of $N$ can be traversed in $O(f)$ time to find the trichromatic triangle $\tau$. The rest of the work (adding $Z_1$, $Z_2$, and $Z_3$ to $P$ and preparing the recursive invocations on $Q_1$, $Q_2$, and $Q_3$) is also easily implemented in $O(f)$ time, so the running time of the algorithm is given by the recurrence
\[  T(f) \le \begin{cases}
           a & \text{for $f\le 1$} \\
           a\cdot f + T(f_1)+T(f_2)+T(f_3) & \text{for $f\ge 2$}
         \end{cases}
 \]
where $a$ is a sufficiently large constant and, for each $i\in\{1,\ldots,3\}$, $f_i$ is the number of faces of $G$ contained in the interior of $Q_i$.
Note that $f_1+f_2+f_3=f-1$ (since $\tau$ is not contained in $Q_1$, $Q_2$, or $Q_3$).  An easy inductive proof shows that $T(f) \le a\cdot f\cdot (f+1)/2 = O(f^2)$.

The recursive procedure described above is used to prove \thmref{triangulation-partition} as follows.  Given an $n$-vertex triangulation $G$ and a spanning tree $T$ of $G$:
\begin{enumerate}
  \item Define one of the faces incident to the root $r$ of $T$ to be the outer face of $G$ and let $r$, $x$, and $y$ denote the three vertices on the outer face of $G$.
  \item Place $\{r\}$, $\{x\}$, and $\{y\}$ in the partition $\mathcal{P}$ and run the recursive procedure described above on the cycle $F:=rxy$ with the portals $e_1=rx$, $e_2=xy$ and $e_3=yr$.
\end{enumerate}
The first step of this procedure runs in constant time.  The second step requires $\Theta(f^2)=\Theta(n^2)$ time in the worst case.

\subsection{Treewidth Analysis}
\seclabel{treewidth-analysis}

Dujmović \etal\ \cite{dujmovic.joret.ea:planar} show that the contraction $H:=N/\mathcal{P}$ has treewidth at most 8.  Although it is not necessary to repeat their argument here, it is worth doing so because it illustrates that a width-$\le\!8$ tree decomposition of $H$ can be easily computed during the construction of the partition $\mathcal{P}$.  (See Diestel \cite[Chapter 12]{diestel:graph} for definitions of treewidth and tree decompositions.)

The argument mirrors the inductive structure of the algorithm used to create $\mathcal{P}$.  Specifically, the recursive invocation on $F$ takes as input the set $X$ (guaranteed by (P4)) of at most 6 parts of $\mathcal{P}$ that cover $V(F)$ and produces a tree decomposition of $N/\mathcal{P}$ that has some bag $B$ containing $X$ and in which every bag has size at most $9$.

For each $i\in\{1,2,3\}$, the at most 4 parts in $X$ that cover $V(F-P_i))$ (again, guaranteed by (P4)) and the two parts $V(Z_j)$, $j\in\{1,2,3\}\setminus\{i\}$ are used as the input $X_i$ to the recursive call on the cycle $Q_i$ that bounds the near-triangulation $N_i$. This produces a tree decomposition of $N_i/\mathcal{P}$ in which some bag $B_i$ contains $X_i$ and every bag has size at most $9$.

The three tree decompositions of $N_1$, $N_2$, and $N_3$ are then joined by introducing a bag $B:=X\cup\{V(Z_1),V(Z_2),V(Z_3)\}$ and making $B$ adjacent to $B_i$ for each $i\in\{1,2,3\}$.  It is straightforward to verify that this does indeed give a tree decomposition of $N/\mathcal{P}$.  The bag $B$ has size $|X|+3\le 9$ and (inductively) every other bag has size at most 9, thus proving that the treewidth of $N/\mathcal{P}$ is at most 8.

We note that this tree decomposition of $H$ can be computed at the same time as the partition $\mathcal{P}$ without contributing more than $O(|V(H)|)\subseteq O(n)$ to the running time of the algorithm.

\section{A Faster Algorithm}
\seclabel{faster}

To obtain a faster algorithm we will create an algorithm (part of) whose running time satisifies the recurrence:
\[  T(f) \le \begin{cases}
         a & \text{for $f\le 1$} \\
         a\cdot(1+\min\{f_1,f_2,f_3\}) + T(f_1)+T(f_2)+T(f_3) & \text{for $f\ge 2$}
       \end{cases}
\]
It is straightforward to show, by induction, that $T(f)\le (a/3)f\log_3(f)=O(a\cdot f\log f)$.  The value of $a$ here depends on the running time of an operation on a certain data structure described next.

Our algorithm makes use of a data structure that preprocesses a $(n+1)$-vertex tree $T$ with root $r$ so that it can maintain a set $S\subseteq V(T)$ that, initially, contains only $r$, and supports the following operations that each take a node $w\in V(T)$ as an argument:

\begin{itemize}
    \item $\textsc{Mark}(w)$: Add $w$ to the set $S$.  A precondition of this operation is that the parent, $v$, of $w$ is already in $S$ but $w$ is not yet in $S$.

    \item $\textsc{NearestMarkedAncestor}(w)$:  Return the first node $v\in S$ that is on the path from $w$ to the root of $T$.
\end{itemize}

In \appref{data_structures} we show how to use standard techniques to obtain the following result:\footnote{A data structure supporting these two operations in $O(1)$ amortized time per operation can be obtained from the work of Gabow and Tarjan \cite{gabow.tarjan:linear}, but the data structure described in \appref{data_structures} is considerably simpler to implement and is fast enough for our purposes.}

\begin{lem}\lemlabel{nearest_marked_ancestor}
    There exists a data structure that preprocesses any $n$-node rooted tree $T$ and supports the $\textsc{Mark}(w)$ and $\textsc{NearestMarkedAncestor}(w)$ operations. Each $\textsc{NearestMarkedAncestor}(w)$ operation takes $O(1)$ time and any sequence of $\textsc{Mark}(w)$ operations takes a total of $O(n\log n)$ time.
\end{lem}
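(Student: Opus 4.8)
The plan is to reduce both operations to a single classic primitive: answering nearest-marked-ancestor queries on a tree where leaves get marked over time.  First I would recall the standard trick for such problems, namely contracting the tree and working with an Euler-tour-style decomposition.  Concretely, I would root $T$, compute for each node its depth and a DFS (Euler) interval, and maintain a \emph{disjoint-set} (union–find) forest in which every node is initially in its own set, except that we conceptually link each already-marked node to its parent.  Since $S$ starts as $\{r\}$ and grows only via $\textsc{Mark}(w)$ under the precondition that the parent of $w$ is already in $S$, the set $S$ always induces a connected subtree containing $r$; the ``frontier'' nodes just below $S$ are exactly the candidate answers to a $\textsc{NearestMarkedAncestor}$ query.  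I would maintain, for each node $u$ currently in $S$, a pointer $a(u)$ to the nearest ancestor of $u$ that is \emph{not} in $S$ — equivalently, collapse each maximal marked subtree-path and let $\textsc{NearestMarkedAncestor}(w)$ follow parent pointers from $w$ until it first reaches a node of $S$.

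The key steps, in order, are: (1) preprocess $T$ in $O(n)$ time to store parent pointers, depths, and an array giving, for each node, its heavy path or simply its position in a fixed DFS order; (2) implement $\textsc{Mark}(w)$ so that it does the bookkeeping needed to keep queries fast; (3) implement $\textsc{NearestMarkedAncestor}(w)$ as a short pointer walk with path compression.  For the amortized bound, the cleanest route is the following: since a $\textsc{Mark}$ operation only ever attaches $w$ to the already-marked component, think of the marked component as being maintained by a union–find structure where each $\textsc{Mark}(w)$ performs one union (of $w$'s singleton with the parent's set), and $\textsc{NearestMarkedAncestor}(w)$ performs one ``find'' that, instead of returning the root of a set, returns the first set-representative encountered on the path to the root of $T$.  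Using union by size plus path compression gives $O(\alpha(n))$ amortized per operation; to get the claimed $O(\log n)$ — which is all we need, and which allows a genuinely simple implementation — it suffices to use union by size \emph{without} path compression (each find then costs $O(\log n)$ because set sizes at least double along the compression path), or path compression without union by rank.  Either way the stated footnote (that $O(1)$ amortized is achievable via Gabow–Tarjan but we are content with something simpler) is respected.

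A slightly different and perhaps more self-contained implementation that I would actually write up: maintain an auxiliary forest $T'$ on the same node set where $T'$-parent of $u$ is defined to be the nearest \emph{unmarked} strict ancestor of $u$ in $T$ (and undefined if none exists).  When $w$ is marked, every node currently pointing its $T'$-parent ``through'' $w$ must be updated to skip past $w$; rather than chase all of them, only update lazily: $\textsc{NearestMarkedAncestor}(w)$ walks up $T'$-parent pointers from $w$ and, upon reaching the target, performs path compression by redirecting every pointer traversed straight to the answer.  One charges each pointer redirection to the $\textsc{Mark}$ operation that created the pointer's current target; the $O(\log n)$ bound follows from a standard potential-function argument on the sizes of the collapsed subtrees, identical to the analysis of union–find with linking by size.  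I would state and prove the $O(n\log n)$ total-time bound for $\textsc{Mark}$ operations and the $O(1)$ worst-case bound for $\textsc{NearestMarkedAncestor}$ by combining the $O(1)$-per-level pointer walk with the observation that each query visits at most one node that was not already compressed away, the rest of the walk being paid for by past marks.

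The main obstacle I anticipate is getting the amortization right so that $\textsc{NearestMarkedAncestor}$ is $O(1)$ \emph{worst case} while the total cost of $\textsc{Mark}$'s stays $O(n\log n)$ — the excerpt's Lemma asserts the query is $O(1)$ time, not $O(1)$ amortized.  The resolution is that in this application the queries performed by the main algorithm are interleaved with the marks in a restricted way (each query's pointer walk can be made to shorten paths, and the shortening work is attributed to marks), but to make the Lemma stand on its own I would instead precompute, via a fixed tree decomposition into $O(n/\log n)$ micro-trees (a bottom-up partition into subtrees of size $\Theta(\log n)$, with a word-sized bitmask per micro-tree), so that an individual $\textsc{NearestMarkedAncestor}$ query does $O(1)$ work inside the relevant micro-tree (via table lookup / bit tricks on the $\Theta(\log n)$-bit mask of marked nodes) plus $O(1)$ work walking between micro-tree roots — and each $\textsc{Mark}$ updates one bitmask in $O(1)$ plus, occasionally, a union–find operation at the macro level whose total cost over all marks is $O((n/\log n)\cdot \log n)=O(n)$, or $O(n\log n)$ with the simple implementation.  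That micro/macro split is the standard technique the footnote alludes to, and writing it carefully — defining the micro-tree partition, the per-micro-tree state, and the boundary bookkeeping — is where the real work of the appendix lies.
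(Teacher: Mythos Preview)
Your proposal takes a different route from the paper's, and none of the three implementations you sketch actually achieves $O(1)$ \emph{worst-case} query time with only elementary machinery. The union--find variants give at best amortised query bounds, as you yourself acknowledge; charging a query's pointer walk to past $\textsc{Mark}$ operations is still amortisation and does not make an individual query $O(1)$ in the worst case. Your fallback, the micro/macro decomposition, is essentially the Gabow--Tarjan approach that the footnote explicitly sets aside as more complicated than needed, and your sketch of it has a gap: you claim ``$O(1)$ work walking between micro-tree roots,'' but the macro tree on $\Theta(n/\log n)$ nodes can still have large depth, so a macro-level find is not $O(1)$ worst case without yet another layer of structure (tabulation, or recursing the whole construction).

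The paper's argument is simpler and hinges on the very ingredient you name in your opening sentence and then abandon: the Euler-tour interval labelling. Because the precondition forces $S$ to be a subtree containing the root, the tree problem collapses to one dimension. Record the Euler tour $v_1,\ldots,v_{2n-1}$ and each node's first and last occurrences $i_v,j_v$; descendants of $v$ are exactly the nodes with first index in $[i_v,j_v]$. Then $\textsc{Mark}(w)$ inserts the integers $i_w$ and $j_w$ into a subset of $\{1,\ldots,2n-1\}$, and $\textsc{NearestMarkedAncestor}(w)$ asks for the maximal run of non-inserted integers containing $i_w$; the node at its boundary is the answer. This interval-splitting primitive is the \emph{inverse} of linked-list union--find: keep an array of pointers, one per integer, each pointing to a shared record holding the current interval's endpoints, so a query is a single array dereference---genuinely $O(1)$ worst case. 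An insertion splits one interval in two and rewrites the pointers on the \emph{shorter} side; the potential $\sum_x \log(\text{length of }x\text{'s current interval})$ then bounds the total rewriting work by $O(n\log n)$. The Euler-tour reduction to a line is the idea your plan is missing; once you have it, no micro/macro machinery is required.
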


In the remainder of this section, we will show how \lemref{nearest_marked_ancestor} can be used to achieve the desired running time.  It is worth noting that the algorithm we now describe has the same recursion tree and produces exactly the same partition $\mathcal{P}$ as the original algorithm.  Therefore $\mathcal{P}$ has all the properties described by Dujmović \etal.\ \cite{dujmovic.joret.ea:planar}. In particular, the quotient graph $H:=G/\mathcal{P}$ has treewidth at most $8$ and a tree decomposition of $H$ of widtch at most $8$ can be computed while computing $\mathcal{P}$.

As before, each recursive step takes as input the cycle $F$ and the three portals $e_1$, $e_2$, and $e_3$.  Additionally, the algorithm requires that the vertices of $P_1$, $P_2$ and $P_3$ are coloured with three different colours.  More precisely, there are three distinct integers $c_1$, $c_2$ and $c_3$ such that $c(v)=c_i$ for each $v\in V(P_i)$ and each $i\in\{1,2,3\}$.  The nearest marked ancestor data structure maintaining $S\subseteq V(T)$ is set up so that $V(F)\subseteq S$ and $(V(N)\setminus V(F))\cap S=\emptyset$.  That is, $S$ contains all vertices on the outer face of $N$, but none of the inner vertices.

The algorithm searches for the trichromatic triangle $\tau$ beginning from the portals.  Refer to \figref{fast-search}. Step~0 of the search begins with $e_{i,0}=e_i$ and $t_{i,0}$ as the unique triangular inner face of $N$ with $e_i$ on its boundary, for each $i\in\{1,2,3\}$. In Step~$j$ of the search, the algorithm has three triangles $t_{i,j}$ and three edges $e_{i,j}$ where $e_{i,j}$ is an edge of $t_{i,j}$ for each $i\in\{1,2,3\}$.  Using the data structure for $T$, the algorithm checks, for each $i\in\{1,2,3\}$, the colours of $t_{i,j}$'s three vertices by calling $\textsc{NearestMarkedAncestor}(w)$ for each of $t_{i,j}$'s three vertices.  This returns a vertex $v\in V(F)$ whose colour gives the colour of $w$.

If $t_{i,j}$ is trichromatic for at least one $i\in\{1,2,3\}$, then the algorithm has found the necessary trichromatic triangle $\tau$ and this step is complete. Otherwise, for each $i\in\{1,2,3\}$, the triangle $t_{i,j}$ contains another bichromatic edge $e_{i,j+1}\neq e_{i,j}$ and this edge bounds another triangular face $t_{i,j+1}\neq t_{i,j}$ of $N$.  The algorithm then continues to Step~$(j+1)$ of the search using the triangles $t_{i,j+1}$ and edges $e_{i,j+1}$ for each $i\in\{1,2,3\}$.  The fact that this algorithm terminates (and would terminate even if the search were limited to any one of the portals) follows from a classic proof of Sperner's Lemma in two dimensions.

 \begin{figure}
   \begin{center}
     \includegraphics{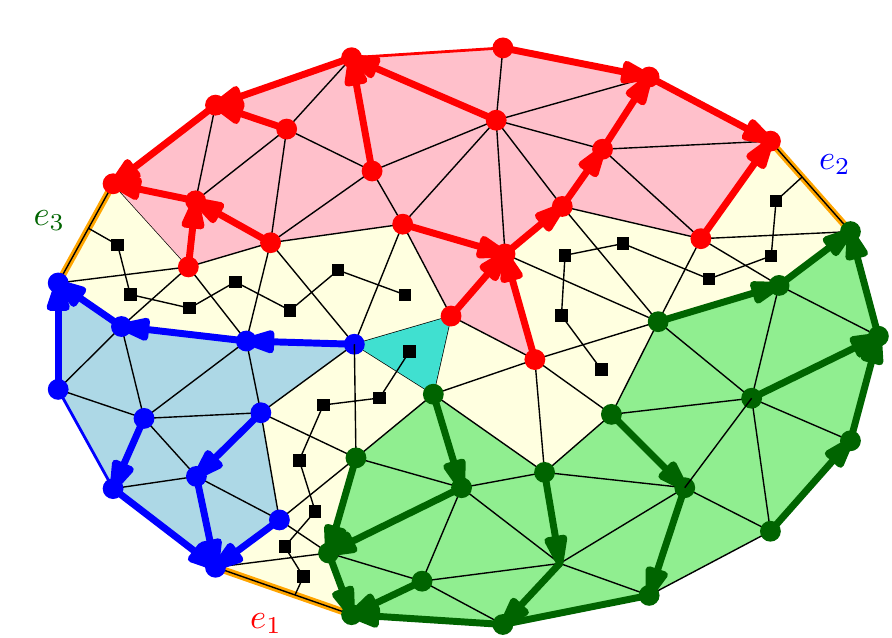} \\[1ex]
     \includegraphics{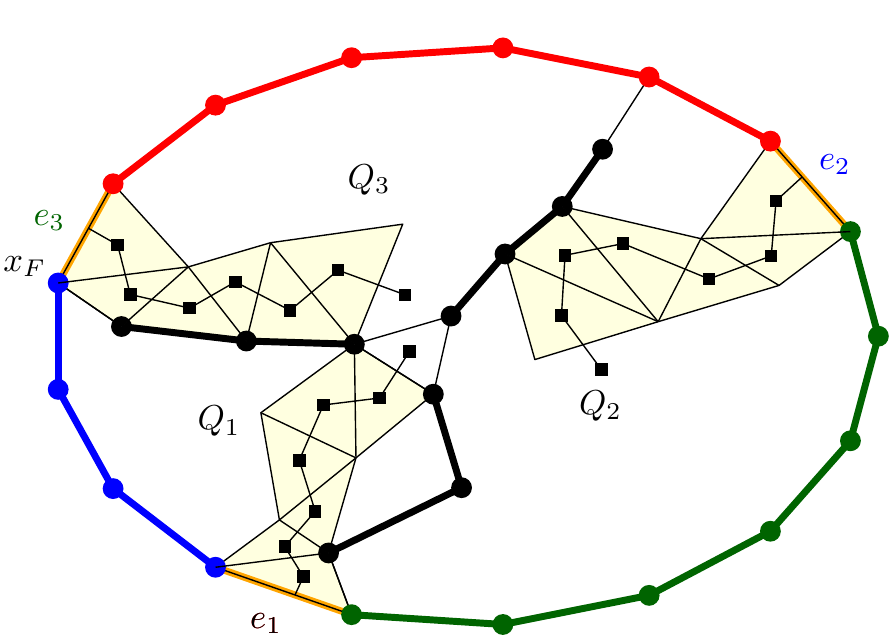}
   \end{center}
   \caption{Searching for the trichromatic triangle $\tau$ beginning at the portals $e_1$, $e_2$, and $e_3$. In this example, $\tau=t_{1,6}$ is found after $k+1=7$ steps.}
   \figlabel{fast-search}
 \end{figure}

Suppose the search for $\tau$ succeeds when $\tau=t_{i,k}$ in Step~$k$.  Thus, for each $i\in\{1,2,3\}$, the algorithm has searched the sequence of triangles $t_{i,0},\ldots,t_{i,k}$.  Each of the shorter subsequences $S_i:=t_{i,0},\ldots,t_{i,k-1}$ consists entirely of bichromatic triangles. Each sequence $S_i$ contains $k$ bichromatic triangles whose vertices are coloured with $\{c_1,c_2,c_3\}\setminus c_i$.

Refer to the second part of \figref{fast-search}. Consider again the graph $M$ with inner faces $Q_1$, $Q_2$, $Q_3$ and $\tau$. For each $i\in\{1,2,3\}$, each face in $S_i$ is contained in $Q_i$.  Since $f_i$ counts the number of triangular faces of $N$ contained in $Q_i$, this implies that $f_i\ge k$ for each $i\in \{1,2,3\}$.  Therefore, $\min\{f_1,f_2,f_3\}\ge k$.  On the other hand, the search for for $\tau$ took $1+k$ steps, each of which performs three $\textsc{NearestMarkedAncestor}(w)$ queries and therefore the entire search runs in time $O(1+k) \subseteq O(1+\min\{f_1,f_2,f_3\})$.

Next, the algorithm prepares the three subproblems defined by $Q_1$, $Q_2$, and $Q_3$ on which to recurse.  To do this it follows the path $Z_i$, in $T$, from each vertex $x_i$ of $\tau$ to the first vertex of $P_i$.  It colours each vertex of $z_i$ with some colour $c_4\not\in\{c_1,c_2,c_3\}$ and then walks $Z_i$ backward, calling $\textsc{Mark}(w)$ for each vertex of $Z_i$.

Finally, in preparing each subproblem $Q_i$ for the recursive invocation, it may be necessary to change the colour of an already coloured vertex $v$ of $F$ with the colour $c_4$ before making the recursive call and then recolouring $v$ with its original colour once the recursion is complete.  This corresponds to introducing an artificial portal adjacent to an edge of $\tau$ contained in $F$.

\subsection{Running-Time Analysis}

We analyze the running time of the preceding algorithm by analyzing two parts separately.

During each recursive invocation, the algorithm does work to find the trichromatic triangle $\tau$.  The time associated with this is $O(1+k)$ where $k\ge \min\{f_1,f_2,f_3\}$.  As already described above, this leads to a recurrence of the form $T(f) \le O(\min\{f_1,f_2,f_3\}) + T(f_1)+T(f_2)+T(f_3)$ which resolves to $O(f\log f)$.  In the initial call, $f=2n-3$ is the number of inner faces of $G$, so the total running time attributable to this part of the algorithm is $O(n\log n)$.

In addition to this, the algorithm does other work in preparing inputs for recursive calls.  Once $\tau$ is identified, the previously uncoloured vertices of $Y$ are coloured.  A vertex $v\in V(G)$ appears in $Y$ during exactly one recursive invocation. Thus, colouring the vertices of $Y$ contributes a total of $O(n)$ time to the running time of the entire algorithm.

Finally, the vertices of $Y$ are added to the set $S$ maintained by the nearest marked ancestor data structure using calls to $\textsc{Mark}(w)$.  By \lemref{nearest_marked_ancestor}, this takes a total of $O(n\log n)$ time.
This completes the proof of the following theorem:

\begin{thm}
  There exists an algorithm that, given any $n$-vertex triangulation $G$ and any breadth-first-search tree $T$ of $G$, runs in $O(n\log n)$ time and finds a partition $\mathcal{P}$ of $V(G)$ such that each $P\in\mathcal{P}$ induces a vertical path in $T$ and the quotient graph $H:=G/\mathcal{P}$ has treewidth at most $8$.
\end{thm}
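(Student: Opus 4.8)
The plan is to combine the algorithm of this section with \lemref{nearest_marked_ancestor} and then verify, in turn, correctness and running time. For \emph{correctness} I would observe that the faster algorithm performs, on each near-triangulation, exactly the same logical operations as the recursive procedure of \secref{original}: it receives the same cycle $F$ and portals $e_1,e_2,e_3$, locates \emph{some} trichromatic inner face $\tau$, adds the three vertical paths $V(Z_1),V(Z_2),V(Z_3)$ to $\mathcal{P}$, and recurses on $Q_1,Q_2,Q_3$ with the portals prescribed there. Only the \emph{implementation} changes: $\tau$ is found by the Sperner walk from the portals instead of by scanning every inner face of $N$, and vertex colours and ``first-ancestor-on-$F$'' queries are answered using the data structure of \lemref{nearest_marked_ancestor} instead of by walking up $T$. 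Consequently the arguments of \secref{original} that re-establish (P1)--(P4) for $Q_1,Q_2,Q_3$, and the construction of \secref{treewidth-analysis} that builds a width-$\le\!8$ tree decomposition of $G/\mathcal{P}$ alongside $\mathcal{P}$, apply verbatim, no matter which valid $\tau$ the walk happens to reach. The only bookkeeping point to check is that colouring \emph{only} the newly added vertices of $Y:=V(Z_1)\cup V(Z_2)\cup V(Z_3)$ with the fresh colour $c_4$, and temporarily recolouring with $c_4$ at most two already-coloured portal vertices of $F$ before each recursive call (restoring their colours afterwards), leaves each subproblem with exactly the colouring the original analysis uses; this is immediate from precondition (P2) and the fact that $T$ is a BFS tree, since the colour of an interior vertex depends only on its first ancestor on the current bounding cycle.

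For the \emph{search for $\tau$}, starting from each portal $e_i$ and the inner face incident to it, each step crosses the other bichromatic edge of the current triangle into the next triangle. Termination --- of the triple, and indeed of each of the three walks on its own --- is the classical planar Sperner path-following argument: a walk launched at a bichromatic boundary edge either halts at a trichromatic triangle or leaves through another bichromatic boundary edge, and the only boundary here is $F$, on which the colour $c_i$ is absent from the side of $F$ bounding $Q_i$, so the $i$-th walk cannot escape and must end at a trichromatic face. If the search stops at step $k$ with $\tau=t_{i,k}$, then for every $i$ the prefix $t_{i,0},\dots,t_{i,k-1}$ consists of $k$ \emph{distinct} bichromatic triangles, all lying inside the region $Q_i$ of the graph $M$ (each is separated from $x_i$ by $\tau$ together with $Z_1,Z_2,Z_3$). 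Hence $f_i\ge k$ for every $i$, so $\min\{f_1,f_2,f_3\}\ge k$, while the search performs only $3(k+1)$ calls to \textsc{NearestMarkedAncestor}, each $O(1)$ by \lemref{nearest_marked_ancestor}, plus $O(1)$ work per visited face using a representation of $G$ that also encodes its dual; thus one recursive invocation spends $O(1+k)\subseteq O(1+\min\{f_1,f_2,f_3\})$ time on this search. I expect this to be the main obstacle: one must confirm that each walk stays strictly on the $Q_i$-side of $\tau\cup Z_1\cup Z_2\cup Z_3$, so that the three prefixes are disjoint and disjoint from $\tau$; without this one would only get $f_1+f_2+f_3\ge k$, which is too weak, rather than $\min\{f_1,f_2,f_3\}\ge k$.

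It remains to \emph{add up the running time}. The search work obeys $T(f)\le a$ for $f\le1$ and $T(f)\le a(1+\min\{f_1,f_2,f_3\})+T(f_1)+T(f_2)+T(f_3)$ for $f\ge2$, with $f_1+f_2+f_3=f-1$; as noted above a routine induction (split on the smallest subproblem, which has at most $(f-1)/3$ faces, absorbing the additive $\min$ term into the resulting drop in $\log_3$) gives $T(f)=O(f\log f)$, and since the initial call has $f=2n-3$ inner faces this part contributes $O(n\log n)$. Beyond the search, each invocation colours the previously uncoloured vertices of $Y$ and inserts them into $S$ via \textsc{Mark}. A vertex of $G$ lies in $Y$ --- and is marked --- during exactly the one invocation in which it is placed into $\mathcal{P}$, so the colouring costs $O(n)$ in total and, by \lemref{nearest_marked_ancestor}, all the \textsc{Mark} operations together cost $O(n\log n)$. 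Summing the three contributions, the whole algorithm runs in $O(n\log n)$ time, and together with the correctness and treewidth claims established above this proves the theorem.
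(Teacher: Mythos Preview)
Your proposal is correct and follows essentially the same approach as the paper: correctness is inherited from the \secref{original} recursion since only the implementation of finding $\tau$ and querying colours changes, the key bound $\min\{f_1,f_2,f_3\}\ge k$ comes from observing that the $k$ bichromatic triangles visited by the walk from $e_i$ all lie inside $Q_i$, and the running time splits into the $O(f\log f)$ search recurrence plus the $O(n)$ colouring and $O(n\log n)$ marking costs. The obstacle you flag---that each walk stays on its side of $\tau\cup Z_1\cup Z_2\cup Z_3$---is indeed the crux and is handled exactly as you outline, so your sketch matches the paper's argument essentially point for point.
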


\section{Discussion}
\seclabel{discussion}

Another variant of \thmref{triangulation-partition} described by Dujmović \etal\ gives a partition $\mathcal{P}$ of $V(G)$ such that $G/\mathcal{P}$ has treewidth at most 3 and each part $Y\in\mathcal{P}$ is the union of at most 3 vertical paths in $T$.  The algorithm described here also gives an $O(n\log n)$ time algorithm for this variant.

\subsection{Other Graph Classes}

\thmref{product-structure} has been generalized to a number of graph classes including bounded-genus graphs \cite{dujmovic.joret.ea:planar}, apex-minor free graphs \cite{dujmovic.joret.ea:planar}, graphs of bounded-degree from proper-minor closed families \cite{dujmovic.esperet.ea:planar}, and $k$-planar graphs \cite{dujmovic.morin.ea:structure}.  In all cases, these generalizations ultimately involve decomposing the input graph into a number of planar subgraphs and applying \thmref{product-structure} to each of these planar graphs.

In at least two cases, the extra work done in these generalizations can be done in $O(n\log n)$ time.  Combined with \thmref{main}, this gives $O(n\log n)$ time algorithms for the corresponding generalizations of \thmref{product-structure}.

\begin{itemize}
  \item For graphs $G$ of fixed Euler genus $g$, the result of Dujmović \etal\ \cite{dujmovic.joret.ea:planar} only requires finding a genus-$g$ embedding of $G$, computing a breadth-first-search tree $T$ of $G$, and computing any spanning-tree $D$ of the dual graph that does not cross edges of $T$.  The two spanning trees $T$ and $D$ can be computed in $O(n)$ time using standard algorithms.  The genus-$g$ embedding of $G$ can be computed in $O(n)$ time using an algorithm of Mohar \cite{mohar:linear}

  \item Given a $k$-plane embedding of a $k$-planar graph $G$, the result of Dujmović, Morin, and Wood \cite{dujmovic.morin.ea:structure} applies \thmref{product-structure} directly to the planar graph obtained by adding a dummy vertex at every point where a pair of edges crosses.

  While the problem of testing $k$-planarity of a graph is NP-complete, even for $k=1$ \cite{grigoriev.bodlaender:algorithms,korzhik.mohar:minimal,urschel.wellens:testing}, there are a number of graph classes that are $k$-planar and in which an embedding can be found easily.  These include (appropriate representations of) map graphs, bounded-degree string graphs, powers of bounded-degree planar graphs, and $k$-nearest-neighbour graphs of points in $\R^2$ \cite[Section~8]{dujmovic.morin.ea:structure}.
\end{itemize}

\subsection{Applications}

The algorithm presented here applies immediately to the four applications of \thmref{product-structure} discussed in the introduction.

\begin{itemize}
  \item There exists an algorithm that, given an $n$-vertex planar graph $G$, runs in $O(n\log n)$ time and computes a 49 queue layout of $G$ \cite{dujmovic.joret.ea:planar}.

  \item There exists an algorithm that, given an $n$-vertex planar graph $G$, runs in $O(n\log n)$ time and computes a nonrepetitive colouring of $G$ using at most 768 colours \cite{dujmovic.esperet.ea:planar}.

  \item There exists an algorithm that, given an $n$-vertex planar graph $G$, runs in $O(n\log n)$ time and computes a $(1+o(1))\log n$-bit adjacency labelling of $G$ \cite{dujmovic.esperet.ea:adjacency}.

  \item There exists an algorithm that, given an $n$-vertex planar graph $G$ and an integer $p$, runs in $O(p^3n\log n)$ time and computes $p$-centered colouring of $G$ using at most $3(p+1)\binom{p+3}{3}$ colours \cite{debski.felsner.ea:improved}.

\end{itemize}

Prior to this work, the bottleneck in all these algorithms was the $\Theta(n^2)$ worst-case running time of the algorithm for computing the decomposition of \thmref{product-structure}.

\subsection{Future Work}

The obvious open problem left by our work is that of finding a faster algorithm.  Can the running-time in \thmref{main} be improved to $O(n)$?

\section*{Acknowledgement}

Part of this research was conducted during the Eighth Workshop on Geometry and Graphs, held at the Bellairs Research Institute, January~31--February~7, 2020.  The author is grateful to the other organizers and participants for providing a stimulating research environment.

The author would like to thank Vida~Dujmović and our summer student \mbox{Ivana~Marusic}.  Ivana implemented the algorithm described in \secref{original} and the algorithm described in \secref{faster}.  The second implementation motivated the development of the simple nearest marked ancestor data structure in \appref{data_structures}.

\bibliographystyle{plainurl}
\bibliography{fasttri}

\appendix

\section{Data Structures}
\applabel{data_structures}

In this appendix we describe the simple data structures used by our algorithm. Devising these data structures is an exercise in the use of two standard techniques (a simple union-find data structure and the interval labelling scheme for rooted trees).

\subsection{Interval Splitting}

An interval splitting data structure stores an initially-empty subset $S$ of $\{1,\ldots,n\}$ under the following two operations, each of which takes an integer argument $x\in\{1,\ldots,n\}$:
\begin{itemize}
    \item $\textsc{Add}(x)$: Add $x$ to the set $S$, i.e., $S\gets S\cup \{x\}$.  It is a precondition of this operation that $x\not\in S$.
    \item $\textsc{Interval}(x)$: Return the pair $(i,j)$ where $i=\max\{y\in S\cup\{0\}:y<x\}$ and $j=\min\{y\in S\cup\{n+1\}:y\ge x\}$.
\end{itemize}

\begin{lem}\lemlabel{interval_splitting}
    There exists a data structure that preprocesses an integer $n$ and supports the $\textsc{Add}(x)$ and $\textsc{Interval}(x)$ operations.  The data structures uses $O(n)$ preprocessing time, each $\textsc{Interval}(x)$ operation runs in $O(1)$ time and any sequence of $\textsc{Add}(x)$ operations takes a total of $O(n\log n)$ time.
\end{lem}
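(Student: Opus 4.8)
The plan is to maintain the set $S$ implicitly using a union–find (disjoint-set) data structure over the universe $\{0,1,\ldots,n+1\}$, where the representative of each set is the smallest element of $\{0\}\cup S$ that is at least as large as... more precisely, I will group each integer $x\notin S$ together with the least element of $S\cup\{n+1\}$ that is $\ge x$. Concretely, maintain a ``successor'' pointer: for each $x$ we want to be able to find $j:=\min\{y\in S\cup\{n+1\}:y\ge x\}$ by a $\textsc{Find}$ operation, and symmetrically a predecessor structure for $i:=\max\{y\in S\cup\{0\}:y<x\}$. Initially $S=\emptyset$, so every $x$ points (via the successor structure) to $n+1$ and (via the predecessor structure) to $0$; this is the preprocessing, done in $O(n)$ time by initializing two union–find forests.

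For $\textsc{Add}(x)$: inserting $x$ into $S$ means that $x$ becomes its own successor and predecessor, and that the block of elements immediately below $x$ that previously had successor-representative equal to the old successor of $x$ must now be split so that those in $[\,\text{(new predecessor of }x)+1,\,x\,]$... wait — union–find supports union, not split, so I should think of it the other way around. Since elements are only ever \emph{added} to $S$ and never removed, the equivalence classes only ever get \emph{finer}, which is the wrong direction for plain union–find. The fix is the standard trick: build the structure by processing $\textsc{Add}$ operations, but maintain for the successor query a union–find forest in which, whenever $\textsc{Add}(x)$ is called, we \emph{merge} $x$ into a new singleton class and leave a pointer; the key realization is that we instead maintain, for the \emph{predecessor} query, a forest where each non-$S$ element points up toward a representative, and on $\textsc{Add}(x)$ we only need to redirect $x$'s own pointer and rely on path compression for the rest. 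The cleanest concrete implementation: keep an array $\mathit{pred}[0..n+1]$; $\textsc{Add}(x)$ sets $\mathit{pred}[x]\gets x$ after first setting $\mathit{pred}$ of everything in the run just above the previous predecessor to point appropriately — but to avoid doing that run explicitly, use the ``relink on find'' idea: $\textsc{Interval}(x)$ walks $\mathit{next}$ pointers from $x$ upward until it reaches an element of $S$ or $n+1$, compressing the path as it goes; $\textsc{Add}(x)$ simply marks $x\in S$. This is exactly the interval-union structure analyzed via the inverse-Ackermann bound, but since the claim only asks for $O(n\log n)$ total over all $\textsc{Add}$/implicit-union steps (and $O(1)$ per $\textsc{Interval}$ in the amortized-with-compression sense, or $O(1)$ worst case if we are slightly cleverer), I can use the simpler weighted-union-without-rank analysis.

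First I would set up both a successor structure and a predecessor structure as above; each is an array of parent pointers over $\{0,\ldots,n+1\}$ plus a boolean array recording membership in $S$. Then I would describe $\textsc{Add}(x)$ as: set the membership bit of $x$, and in the successor forest make $x$ a root (so queries from below now stop at $x$) by union-by-size of the two adjacent blocks into $x$'s block — and symmetrically in the predecessor forest. Then I would describe $\textsc{Interval}(x)$ as one $\textsc{Find}$ in each forest, returning the two representatives (with the convention that the successor forest's ``top'' representative is $n+1$ and the predecessor forest's is $0$). The correctness argument is a short invariant check: after any sequence of $\textsc{Add}$s, the successor-forest representative of $x$ is $\min\{y\in S\cup\{n+1\}:y\ge x\}$ and dually for predecessors. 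The main obstacle — the only real subtlety — is that $\textsc{Add}$ must effectively \emph{split} a block, which union–find does not natively support; I would resolve this by observing that over the whole execution each element changes its block at most $O(\log n)$ times under union-by-size (equivalently, I can process the adds so that ``splitting $B$ at $x$'' is implemented as discarding $B$ and creating the two halves, charging the work to the smaller half, giving the $O(n\log n)$ total), while $\textsc{Interval}$ is just a pointer lookup into an array that stores, for each element, the endpoints of the block currently containing it — so $\textsc{Interval}$ is genuinely $O(1)$ worst case and all the cost lives in the $O(n\log n)$ bound for the $\textsc{Add}$ sequence. The rest is routine and I would not grind through it.
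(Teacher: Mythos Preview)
After the union--find detours, your final plan---store for each element a pointer to the record holding its current interval so that $\textsc{Interval}$ is a single array lookup, and implement $\textsc{Add}(x)$ by relabelling only the smaller of the two pieces created by the split---is exactly the paper's construction and analysis. The paper phrases the $O(n\log n)$ bound via a potential $\Phi=\sum_{\ell} c\log_2(j_\ell-i_\ell)$, but that is just a repackaging of the smaller-half charging you describe.
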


\begin{proof}
    The data structure is essentially the inverse of one of the simplest union-find data structures that represents sets as linked lists in which each node has a pointer to the head of the list.

    The data structure contains an array $a_1,\ldots,a_n$ of pointers.  For each $x\in\{1,\ldots,n\}$, the array entry $a_x$ points to a memory location $m$ storing the interval $(i,j)$ that answers the $\textsc{Interval}(x)$ query.  In this way each $\textsc{Interval}(x)$ query operation runs in $O(1)$ time, as required.

    The data structure is memory-efficient in the following sense: Suppose that, at some point in time $S=\{x_1,\ldots,x_k\}$ with $x_1<\cdots<x_k$.   and use the convention that $x_0:=0$ and $x_{k+1}=n+1$.  Then, for each $i\in \{0,\ldots,k\}$, the array locations $a_{x_{i}+1},\ldots,a_{\min\{n,x_{i+1}\}}$ all point the same memory location $m$ that contains the pair $(i,j)$, $i<x\le j$ that answers the $\textsc{Interval}(x)$ query for each value $x\in\{x_{i}+1,\ldots,x_{i+1}\}$.  Thus, the number of distinct memory locations $m$ used to store answers to $\textsc{Interval}(x)$ queries is exactly $k+1$.

    To perform an $\textsc{Add}(x)$ operation, the data structure first looks at the pair $(i,j)$ stored at the memory location $m$ referenced by $a_x$.
    Observe that the $j-i$ array entries $a_{i+1},\ldots,a_{j}$ all point to the same memory location $m$ and let $k:=x-i$.  The data structure allocates a new memory location $m'$ containing a pair $(i',j')$.  The algorithm then makes a choice, depending on the value of $k$.

    \begin{enumerate}
        \item If $k\le (j-i)/2$, then it sets $(i',j')\gets (i,x)$, sets $(i,j)\gets (x,j)$ and sets $a_{i+1},\ldots,a_{x}\gets m'$.
        \item Otherwise, it sets $(i',j')\gets (x,j)$, sets $(i,j)\gets (i,x)$ and sets $a_{x+1},\ldots,a_{j}\gets m'$.
    \end{enumerate}
    It is straightforward to verify that these operations are correct.

    To analyze total running time of a sequence of $\textsc{Add}(x)$ operations, we use the potential method of amortized analysis.  For each $\ell\in\{1,\ldots,n\}$, let $\Phi_\ell=c\log_2(j-i)$ where $(i,j)$ is the answer to $\textsc{Interval}(\ell)$ and let $\Phi=\sum_{\ell=1}^n \Phi_\ell$.  Observe that $0\le \Phi_\ell\le c\log_2(n+1)$ for each $\ell\in\{1,\ldots,n\}$, so $0\le \Phi\le cn\log_2(n+1)$. Note, furthermore that the $\textsc{Interval}(x)$ operation has no effect on $\Phi$.

    When an $\textsc{Add}(x)$ operation runs, it updates some number $z$ of array entries (either $z=k$ or $z=j-i-k$.   This takes $O(z)$ time and does not cause $\Phi_\ell$ to increase for any $\ell\in\{1,\ldots,n\}$.  Furthermore, this operation causes $\Phi_\ell$ to decrease by at least $c$ for each array entry $a_\ell$ that is modified. Therefore, letting $\Phi$ and $\Phi'$ denote the value of $\Phi$ before and after this operation, we have $\Phi'-\Phi \le -cz$.  The amortized running time of this operation is therefore $O(1+z+\Phi'-\Phi) = O(1)$ for a sufficiently large constant $c$.

    Therefore each $\textsc{Add}(x)$ operation runs in $O(1)$ amortized time, the minimum potential is $0$ and the maximum potential is $cn\log_2(n+1)$, so the total running time of any sequence of $\textsc{Add}(x)$ operations is at most $O(m + n\log n)$.  The precondition that $x\not\in S$ ensures that $m\le n$, so the total running time is $O(n\log n)$.
\end{proof}

\subsection{Nearest Marked Ancestor}

\begin{proof}[Proof of \lemref{nearest_marked_ancestor}]
    The data structure is essentially the interval labelling scheme for trees combined with the interval splitting data structure from the previous section.

    Let $T'$ be the directed graph obtained by replacing each undirected edge $vw$ of $T$ with two directed edges $vw$ and $wv$.  Since every node of $T'$ has the same in and out degree, it is Eulerian.  Let $v_1,v_2,\ldots,v_{2n-1}$ be the sequence of vertices encountered during an Euler tour of $T'$ that begins and ends at the root $v_1=v_{2n-1}$ of $T$.
    For each node $v$ of $T$, let $i_v:=\min\{i:v_i=v\}$ and $j_v:=\max\{j:v_j=v\}$. Observe that $v_{i_v},v_{i_v+1},\ldots,v_{j_v}$ contains exactly those nodes of $T$ that have $v$ as a $T$-ancestor.

    The data structure stores the sequence $v_0,v_1,v_2,\ldots,v_{2n-1}$ in an array and maintains an interval splitting data structure on the set $1,\ldots,2n-1$.  The operation of $\textsc{Mark}(w)$ is simple: We simply call $\textsc{Add}(i_w)$ and $\textsc{Add}(j_w)$.  Note that the precondition that $w\not\in S$ ensures that the precondition $x\not\in S$ of the $\textsc{Add}(x)$ operation is satisified.  Thus, any sequence of $\textsc{Mark}(w)$ operations results in a sequence of $\textsc{Add}(x)$ operations on the set $\{1,\ldots,2n-1\}$.  By \lemref{interval_splitting} this takes a total of $O(n\log n)$ time, as required.

    The operation of the $\textsc{NearestMarkedAncestor}(v)$ is almost as simple.  We call $\textsc{Interval}(i_w)$ to obtain some pair $(i,j)$ with $i< i_w \le j$.  There are two cases to consider:
    \begin{enumerate}
        \item If $j=i_w$ then this is because $w\in S$, in which case $w$ is the nearest marked ancestor of itself.
        \item Otherwise, $i<i_w<j$, and $(i,j)=(i_v,j_v)$ for some node $v\in S$.  Therefore, $v_{i}=v_{j}$ is the nearest marked ancestor of $w$.
    \end{enumerate}
    Therefore, in either case, we obtain the nearest marked ancestor of $w$ in $O(1)$ time, as required.
\end{proof}

\end{document}